\documentclass[conference]{IEEEtran}
\IEEEoverridecommandlockouts

\usepackage[utf8]{inputenc}
\usepackage[T1]{fontenc}
\usepackage{url}
\usepackage{ifthen}
\usepackage{cite}
\usepackage[cmex10]{amsmath}

\usepackage{xcolor}
\usepackage{mathtools}
\mathtoolsset{showonlyrefs}

\usepackage{epsfig,amsfonts,subfigure,color,amsbsy}
\usepackage[nolist]{acronym}
\usepackage{graphicx,cite,amssymb,amsthm}
\usepackage{amsmath}
\usepackage{mathrsfs}
\usepackage{tikz}
\usepackage{bm}
\usepackage{multirow}
\usepackage{bigstrut}
\usepackage{psfrag}
\usepackage{stackengine}
\usetikzlibrary{shapes,arrows}
\usepackage{xcolor}
\usepackage{mathtools}
\usepackage{tabularx,array}
\usepackage{mathbbol}
\usepackage{lipsum}

\usepackage{tikz}
\usetikzlibrary{shapes,arrows}
\usetikzlibrary{calc}
\usetikzlibrary{patterns}
\usepackage{mathrsfs}
\usetikzlibrary{fadings}
\usetikzlibrary{shadows.blur}
\usetikzlibrary{spy}
\usetikzlibrary{positioning,chains,fit}

\usepackage{xurl}
\usepackage{booktabs}
\usetikzlibrary{backgrounds,arrows.meta}
\usepackage{pgfplots}
\pgfplotsset{compat=newest}

\usepackage{orcidlink}

\newtheorem{theorem}{Theorem}
\newtheorem{lemma}[theorem]{Lemma}
\newtheorem{proposition}[theorem]{Proposition}
\newtheorem{definition}[theorem]{Definition}

\definecolor{PsZero}{RGB}{0,114,178}
\definecolor{PsFive}{RGB}{0,158,115}
\definecolor{PsTen}{RGB}{204,121,167}

\newcommand{\pr}[1]{\operatorname{Pr}\left[#1\right]}
\newcommand{\expect}[1]{\mathbb{E}\left[#1\right]}
\newcommand{\alphabetq}{\Sigma_q}
\newcommand{\pins}{p_{\mathrm{i}}}
\newcommand{\pdel}{p_{\mathrm{d}}}
\newcommand{\psub}{p_{\mathrm{s}}}
\newcommand{\ptran}{p_{\mathrm{t}}}

\usepackage{hyperref}

\begin{document}

\title{Orthogonal SC-LDPC Codes for the IDS Channel}

\author{%
  \IEEEauthorblockN{
    \textbf{Frederik~Walter}\IEEEauthorrefmark{1}\orcidlink{0009-0005-3997-3584},
    \textbf{Jessica~Bariffi}\IEEEauthorrefmark{1}\orcidlink{0000-0002-4304-2521},
    and
    \textbf{Antonia Wachter-Zeh}\IEEEauthorrefmark{1}\orcidlink{0000-0002-5174-1947}
  }

  \IEEEauthorblockA{\IEEEauthorrefmark{1}%
                    School of Computation, Information and Technology,
                    Technical University of Munich,
                    80333 Munich, Germany}
  \IEEEauthorblockA{Email:
                    frederik.walter@tum.de,
                    jessica.bariffi@tum.de,
                    antonia.wachter-zeh@tum.de}
                    \vspace{-1.5\baselineskip}
   \thanks{Funded by the European Union (DiDAX, 101115134). Views and opinions expressed are however those of the author(s) only and do not necessarily reflect those of the European Union or the European Research Council Executive Agency. Neither the European Union nor the granting authority can be held responsible for them.}%
}

\maketitle

\begin{abstract}
    We propose a coding scheme for the insertion, deletion and substitution (IDS) channel tailored to the short strands of DNA-based data storage. Instead of an inner synchronisation code, a spatially coupled LDPC (SC-LDPC) code is placed across the strands, with its coupling axis aligned to the position along the strands. Every decoded position thus becomes an anchor to synchronise the neighbouring strands. The decoding waves of the SC-LDPC code carry these anchors from both ends of the strands towards the middle.
    We bound the achievable rate from above by the information rate of a single strand and from below by that of a separated receiver. For strands of 100 nucleotides (the relevant regime for DNA-based data storage), our scheme outperforms a rate-matched concatenated construction.
\end{abstract}

\begin{IEEEkeywords}
    DNA data storage, insertion deletion substitution channel, LDPC codes, synchronisation, iterative decoding
\end{IEEEkeywords}

\section{Introduction}
\label{sec:introduction}

DNA is a candidate medium for archival data storage because of its density and durability~\cite{church_2012_NextGenerationDigital,goldman_2013_PracticalHighcapacity,erlich_2017_DNAFountain}. Data is written by synthesising a pool of short strands, today at most a few hundred nucleotides long~\cite{lin_2025_BlockLength}, and read back by sequencing them. Both steps introduce insertions, deletions and substitutions independently in every strand~\cite{heckel_2019_CharacterizationDNA}, so the storage channel is a collection of independent insertion, deletion and substitution (IDS) channels, one per strand. Most coding schemes combine an inner code that restores synchronisation along each strand with an outer code, typically across strands, that corrects the residual errors~\cite{press_2020_HEDGESErrorcorrecting,welter_2026_EndtoEndCoding,zhang_2025_DesignAnalysis}.

The inner layer costs rate and blocklength. Markers or watermarks~\cite{davey_2001_ReliableCommunication,wang_2011_SymbolLevelSynchronization,haghighat_2025_HalfMarkerCodes}, guard bands~\cite{pfister_2021_PolarCodes,tal_2022_PolarCodes} and inner synchronisation codes~\cite{maarouf_2023_ConcatenatedCodes} all reduce the rate left for the outer code. Moreover, a per-strand codeword can be no longer than its strand, so it pays a finite-length penalty at the short strand lengths that synthesis delivers~\cite{maarouf_2023_FiniteBlocklength}.

We use no inner code and synchronise through the outer code instead. Every column of a frame of $n$ strands is protected by parity checks shared by $n$ strands that drift independently, so a column the code certifies becomes a synchronisation anchor for all of them at once. This orthogonal layout appears in Geno-Weaving~\cite{wang_2025_GenoWeavingFramework,lin_2025_BlockLength} for deletion channels and in the salami-slicing trellis~\cite{wu_2026_SalamiSlicing}, which advances decision-feedback trellises one position at a time, polar-decodes each cross-strand slice and is evaluated at low error rates.
Delayed coding~\cite{shibata_2022_DelayedCoding,shibata_2024_HierarchicalInterleaving} also uses decoded codewords as pilots, but for a single long binary sequence and with an asymptotic analysis. Our scheme couples the columns by a spatially coupled low-density parity-check (SC-LDPC) code, feeds soft information back to the trellises, can revoke a decision, and operates on short quaternary strands at substantially higher insertion and deletion rates.

\emph{Contributions:} We propose an SC-LDPC code across strands whose spatial position is the position along the strand, so that the coupling axis and the drift axis coincide (Section~\ref{sec:structure:construction}). We decode it by alternating between a beam-pruned BCJR on the drift trellis of each strand and windowed belief propagation on the two bit-planes. The wave-like decoding behaviour of SC-LDPC codes is utilised to move the anchored ends towards each other (Sections~\ref{sec:trellis} and~\ref{sec:schedule}).
At the actual strand length, we bound the rate from above by the information rate of a single strand with both ends known, and from below by the rate of a separated receiver that synchronises each strand once and then decodes across strands (Section~\ref{sec:rates}). In simulations (Section~\ref{sec:results}), the scheme decodes at higher insertion and deletion rates than a rate-matched concatenated construction~\cite{maarouf_2023_ConcatenatedCodes} at every substitution rate, beyond the finite-blocklength benchmark of a single-strand code of comparable length~\cite{maarouf_2023_FiniteBlocklength}, and well above the rate of the separated receiver.

\section{Channel Model and Achievable Rates}
\label{sec:channel}

In the following we denote by $\alphabetq := \{0, \ldots, q-1\}$ the $q$-ary alphabet for a positive integer $q$. Vectors are denoted by bold lowercase letters, and random variables of
vectors and matrices by bold uppercase letters.
For random variables $X,Y$, we denote the entropy by $H(X)$ and the conditional entropy of $X$ given $Y$ by $H(X\mid Y)$.
\subsection{The Insertion, Deletion and Substitution Channel}

We recap the state-based IDS channel of
Davey and MacKay~\cite{davey_2001_ReliableCommunication} over an alphabet of
size $q$.
The IDS channel maps an input sequence $\mathbf{x} \in \alphabetq^L$ to an output $\mathbf{y} \in \alphabetq^M$ of random length $M$, reading $\mathbf{x}$ symbol by symbol. At each time step $k$, the channel applies one of three actions, independently of $\mathbf{x}$ and of all previous actions: with probability $\pins$ a uniformly random symbol is appended to the output and the input symbol $x_k$ is left unread; with probability $\pdel$ the symbol $x_k$ is deleted; and with probability $\ptran = 1 - \pins - \pdel$ it is transmitted, in which case it is replaced with probability $\psub$ by one of the other $q-1$ symbols, chosen uniformly. The channel stops once all symbols of $\mathbf{x}$ have been read. Each input symbol is thus preceded by a geometrically distributed number of insertions and then either deleted or transmitted. The receiver only observes the output $\mathbf{y}$ and knows neither the actions applied nor the alignment of the output and input symbols.
An illustration of the IDS channel can be found in \cite{davey_2001_ReliableCommunication,maarouf_2023_ConcatenatedCodes}.
The natural synchronisation variable is the \emph{drift}.
\begin{definition}\label{def:drift}
    For $k \in \{0, \ldots , L \}$, let $j_k$ denote the number of output symbols emitted up to and including the action that reads $x_k$, with $j_0 = 0$. We define the \emph{drift} after $k$ input symbols by $\delta_k := j_k - k$.
\end{definition}
Note that deletions decrement the drift, insertions increment it, and transmissions leave it unchanged.
Then, $(\delta_k)_{k \in \{0, \ldots, L\}}$ is a Markov chain with $\delta_0=0$ and $\delta_L=M-L$, independent of $\mathbf{x}$.
If $x_k$ is transmitted, it is received as $y_{k+\delta_k}$.

A frame of $n$ strands uses this channel $n$ times independently, and each strand is read once. We assume that the receiver knows the ordering of the strands, as provided in practice by an index, and that no strand is lost or duplicated.
Following~\cite{maarouf_2023_ConcatenatedCodes}, we sweep $\pins = \pdel$ and treat $\psub$ as a separate parameter; the decoder makes no use of this split.

\subsection{Achievable Rates}
\label{sec:rates}

The capacity of the IDS channel is unknown. However, lower and upper bounds are given in~\cite{gallager_1961_SequentialDecoding,fertonani_2011_BoundsCapacity}; for the pure deletion channel, see~\cite{mitzenmacher_2009_SurveyResults,rubinstein_2023_ImprovedUpper,morozov_2025_SimpleConverse}.
Every scheme in this paper uses uniform i.i.d.\ inputs, so the relevant benchmarks are information rates for uniform inputs. Since our strands are short, we define them at the level of a strand. In the following, if nothing else is stated, $\mathbf{X}$ is uniformly distributed over $\alphabetq^L$. We denote by $\mathbf{Y}$ the output of the IDS channel given $\mathbf{X}$.

\begin{definition}
    For an integer $L\geq 1$,
    define the \emph{finite-$L$ rate} as
    \vspace{-6pt}
    \begin{equation}
      I_L = \tfrac{1}{L}\, I(\mathbf{X};\mathbf{Y}) = \tfrac{1}{L}\, \bigl[H(\mathbf{Y}) - H(\mathbf{Y}\mid\mathbf{X})\bigr].
      \label{eq:il}
    \end{equation}
    We further define the symmetric information rate by
        $I_{\mathrm{s}} = \lim_{L\to\infty} I_L$.
    The existence of the limit is proven in \cite{tal_2022_PolarCodes}.
\end{definition}

$I_L$ is thus the relevant ceiling for codes across strands, with both anchors $\delta_0=0$ and $\delta_L=M-L$ part of the observation.

\begin{lemma}\label{lemma:H(Y)}
    The
    length $M$ of the IDS channel's output $\mathbf{Y}$ is independent of the input $\mathbf{X}$. Furthermore, given the length $M=m$, the output $\mathbf{Y}$ is uniformly distributed over $\alphabetq^m$ and it holds that
    \vspace{-6pt}
    \begin{align}
        H(\mathbf{Y}) &= H(M) + \expect{M}\log_2 q,
        \\
        I_L &= \tfrac{1}{L}\Bigl(\expect{M}\log_2 q
        - H(\mathbf{Y}\mid\mathbf{X},M)\Bigr).
        \vspace{-6pt}
    \end{align}
\end{lemma}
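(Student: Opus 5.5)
The plan is to work through the action sequence. Let $\mathbf{A}$ denote the full sequence of channel actions, that is, the insertions, deletions and transmissions, together with which transmitted symbols are substituted. By the channel model, $\mathbf{A}$ is drawn independently of $\mathbf{X}$. The length $M$ is a deterministic function of $\mathbf{A}$: it counts the insertions plus the transmissions. Hence $M$ is independent of $\mathbf{X}$, which gives the first claim directly.

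For the uniformity claim, I would condition on $\mathbf{A}=\mathbf{a}$ with $M=m$, for an arbitrary action sequence $\mathbf{a}$. Each output position $y_j$ is then one of three things:
\begin{itemize}
\item an inserted symbol, which is uniform on $\alphabetq$ and independent of everything else;
\item an unsubstituted transmitted symbol $x_k$;
\item a substituted symbol $x_k+U$, where $U$ is uniform on the nonzero shifts. This shift representation of ``uniformly one of the other $q-1$ symbols'' is exact because $x\mapsto x+u \bmod q$ is a bijection.
\end{itemize}
The transmitted indices $k$ are distinct. Since $\mathbf{X}$ is uniform i.i.d., the symbols $x_k$ at those indices are i.i.d. uniform, and a uniform symbol plus an independent shift is still uniform. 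The output symbols also depend on disjoint independent sources, so given $\mathbf{A}=\mathbf{a}$ the vector $\mathbf{Y}$ is uniform on $\alphabetq^m$. Averaging over all $\mathbf{a}$ with the same $m$ preserves uniformity. Therefore $\mathbf{Y}$ given $M=m$ is uniform on $\alphabetq^m$.

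The entropy identities then follow by bookkeeping. Since $M$ is a function of $\mathbf{Y}$ (it is the length of $\mathbf{Y}$), we have $H(\mathbf{Y})=H(\mathbf{Y},M)=H(M)+H(\mathbf{Y}\mid M)$. By uniformity, $H(\mathbf{Y}\mid M=m)=m\log_2 q$, so $H(\mathbf{Y}\mid M)=\expect{M}\log_2 q$. This is the first displayed equation.

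For $I_L$ I would apply the same chain rule conditionally: $H(\mathbf{Y}\mid\mathbf{X})=H(M\mid\mathbf{X})+H(\mathbf{Y}\mid\mathbf{X},M)$. Independence of $M$ and $\mathbf{X}$ gives $H(M\mid\mathbf{X})=H(M)$. Substituting into \eqref{eq:il}, the $H(M)$ terms cancel and the stated formula for $I_L$ remains.

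The main point needing care is the uniformity step. Conditioning on the alignment must not create dependence, so I have to check two things: that the transmitted positions read distinct input symbols, and that the substitution noise is independent of $\mathbf{X}$. Both follow from the definition of the channel. A further minor point is that $H(M)$ and $\expect{M}$ must be finite. This holds because $M$ is a sum of $L$ independent geometric-type counts, one per input symbol.
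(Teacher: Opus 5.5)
Your proposal is correct and follows essentially the same route as the paper: $M$ is a function of the action sequence and hence independent of $\mathbf{X}$, $\mathbf{Y}$ is uniform on $\alphabetq^m$ given the alignment, and both identities follow from the chain rule, using that $M$ is a function of $\mathbf{Y}$ and that $H(M\mid\mathbf{X})=H(M)$. Your version is more careful than the paper's, which glosses over the substitution step, the joint rather than merely marginal uniformity of the output symbols, and the finiteness of $H(M)$ and $\expect{M}$.
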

\begin{proof}
    Note that $M$ depends solely on the action sequence performed on $\mathbf{X}$ and not on $\mathbf{X}$ itself. Thus, every output symbol derives from either its own insertion, which is uniformly distributed, or its own input symbol, which, again, is uniformly distributed. Hence, every output symbol follows a uniform distribution. For a given length $M=m$ and a sequence of actions, the output $\mathbf{Y}$ is uniform over $\alphabetq^m$.
    Since $M$ is a function of $\mathbf{Y}$, we obtain
    $$
        H(\mathbf{Y})
        = H(M)+H(\mathbf{Y}\mid M)
        = H(M)+\expect{M}\log_2 q.
    $$
    Since $M$ is independent of $\mathbf{X}$, we additionally observe that $H(\mathbf{Y}\mid\mathbf{X})=H(M)+H(\mathbf{Y}\mid\mathbf{X},M)$.
\end{proof}

We estimate $H(\mathbf{Y}\mid\mathbf{X}) = -\expect{\log_2 \pr{\mathbf{Y}\mid\mathbf{X}}}$ by the
Arnold--Loeliger method~\cite{arnold_2001_InformationRate,arnold_2006_SimulationBasedComputation},
as applied to insertion and deletion channels
in~\cite{hu_2010_AchievableInformation,maarouf_2023_ConcatenatedCodes}:
$\pr{\mathbf{Y}\mid\mathbf{X}}$ is read off the terminal state $\delta_L = M - L$ of the
forward recursion of the drift trellis with the transmitted strand known.
$H(\mathbf{Y})$ follows exactly from Lemma~\ref{lemma:H(Y)}; its term $H(M)/L$, of order $(\log L)/L$, also appears in $H(\mathbf{Y}\mid\mathbf{X})$ and must be kept on both sides, since it is not negligible at $L = 100$. The two known anchors are side information, and their amount grows more slowly than $L$. Per symbol it therefore shrinks as $L$ grows,
so $I_L$ decreases towards the symmetric information rate $I_{\mathrm{s}}$ from above.
The floor of the rate ladder is the \emph{separated} receiver, which synchronises every strand once and then decodes across strands without feedback.
\begin{definition}\label{def:GMI}
    Consider $q = 4$.
    With the binary labelling $X_k = 2B_{k,0}+B_{k,1}$, the bit $B_{k,i}$, $i\in\{0,1\}$, is the bit of position $k\in\{1,\dots,L\}$ in bit-plane $i$. We denote by
    $\lambda_{k,i} = \ln\bigl(\pr{B_{k,i} = 0 \mid \mathbf{Y}} / \pr{B_{k,i} = 1 \mid \mathbf{Y}}\bigr)$
    the posterior log-likelihood ratio (LLR) of bit $B_{k,i}$, computed by a single full-strand BCJR pass with uniform priors and both anchors. The \emph{BCJR-once generalised mutual information (GMI)}, in bits per nucleotide, is defined by
    \vspace{-6pt}
    \begin{equation}
        \mathrm{GMI}_L \!
        = 2 - \frac{1}{L}\sum_{k=1}^{L}\sum_{i\in\{0,1\}}
        \!\!\!
        \expect{\log_2\bigl(1 + \mathrm{e}^{-(1-2B_{k,i})\lambda_{k,i}}\bigr)} \! ,
        \label{eq:gmi}
        \vspace{-6pt}
    \end{equation}
    where the expectation is over $\mathbf{X}$ and the channel realisation.
\end{definition}

By~\cite{martinez_2009_BitInterleavedCoded}, $L \cdot \mathrm{GMI}_L$ is the generalised mutual information of the bit metric $\prod_{k,i}\pr{B_{k,i}\mid\mathbf{Y}}$, which treats all bits as independent. Hence $\mathrm{GMI}_L$ is achievable by an outer code across strands whose decoder uses the posteriors of a single BCJR pass per strand, i.e.\ by the separated receiver. It is therefore an achievable rate for that receiver, not an upper bound: a receiver that feeds the beliefs of the outer decoder back to the BCJR as priors can exceed it, and Section~\ref{sec:localrates} shows by how much.

\begin{figure}[t]
    \centering
    \begin{tikzpicture}
\begin{axis}[
    width=\columnwidth, height=0.5\columnwidth,
    xlabel={$p_{\mathrm{i}} = p_{\mathrm{d}}$},
    ylabel={Rate [bit/nt]},
    xmin=0, xmax=0.20,
    ymin=0, ymax=2.05,
    grid=major,
    grid style={dashed, gray!40},
    xtick={0, 0.05, 0.10, 0.15, 0.20},
    xticklabel style={/pgf/number format/fixed},
    legend pos=north east,
    legend cell align={left},
    legend style={font=\footnotesize},
    unbounded coords=discard,
    every axis plot/.append style={thick, no markers},
]
    \addplot[color=PsZero]
        table[x=p, y=air_ps000, col sep=comma]{fig/rate_ladder_4ary_L100.csv};
    \addlegendentry{$p_{\mathrm{s}} = 0$}
    \addplot[color=PsFive]
        table[x=p, y=air_ps050, col sep=comma]{fig/rate_ladder_4ary_L100.csv};
    \addlegendentry{$p_{\mathrm{s}} = 0.05$}
    \addplot[color=PsTen]
        table[x=p, y=air_ps100, col sep=comma]{fig/rate_ladder_4ary_L100.csv};
    \addlegendentry{$p_{\mathrm{s}} = 0.10$}
    \addplot[color=PsZero, dashed]
        table[x=p, y=gmi_ps000, col sep=comma]{fig/rate_ladder_4ary_L100.csv};
    \addlegendentry{$\mathrm{GMI}_{100}$ (dashed)}
    \addplot[color=PsFive, dashed, forget plot]
        table[x=p, y=gmi_ps050, col sep=comma]{fig/rate_ladder_4ary_L100.csv};
    \addplot[color=PsTen, dashed, forget plot]
        table[x=p, y=gmi_ps100, col sep=comma]{fig/rate_ladder_4ary_L100.csv};
    \addplot[color=orange, densely dotted] coordinates {(0, 0.48) (0.20, 0.48)};
    \addlegendentry{$R = 0.48$ bit/nt}
    \addplot[only marks, mark=*, mark size=2pt, color=PsZero, forget plot] coordinates {(0.110, 0.48)};
    \addplot[only marks, mark=*, mark size=2pt, color=PsFive, forget plot] coordinates {(0.082, 0.48)};
    \addplot[only marks, mark=*, mark size=2pt, color=PsTen, forget plot] coordinates {(0.055, 0.48)};
\end{axis}
\end{tikzpicture}
    \vspace{-5mm}
    \caption{Rate ladder for $L = 100$ versus $\pins = \pdel$:
    $I_{100}$~\eqref{eq:il} (solid, $10^4$ strands per point),
    $\mathrm{GMI}_{100}$~\eqref{eq:gmi} (dashed, $2000$ strands), the code
    rate $R = 0.48$ bit/nt (dotted), and the waterfalls of the inter-strand
    scheme at a frame error rate (FER) of $10^{-3}$ (dots).}
    \vspace{-5mm}
    \label{fig:rates}
\end{figure}

Fig.~\ref{fig:rates} shows both rungs for $q = 4$ and $L = 100$. The information rate $I_{100}$ stays above our code rate $R=0.48$~bit/nt up to $\pins=\pdel\approx 0.194$, $0.168$ and $0.139$ for $p_\mathrm{s}=0$, $0.05$ and $0.10$, whereas $\mathrm{GMI}_{100}$ falls below it already at $0.038$, $0.029$ and $0.021$: with uniform priors, the position ambiguity of a 100-nucleotide strand costs most of the rate.
A separated receiver could therefore code across strands at our rate only at insertion and deletion rates almost three times smaller than those at which the inter-strand scheme operates (Section~\ref{sec:results}).

\subsection{Local Rates and Achievability}\label{sec:localrates}
The two rungs in Section~\ref{sec:rates} differ in what the receiver knows once it has decided on a position. Assume that positions $1, \ldots, k-1$, for some $k$, are known. Hence, the anchor has moved from position $0$ to position $k-1$. We define a local rate for position $k$ as follows.
\begin{definition}\label{def:local-rate}
    Let $\mathbf{X} = (X_1, \ldots, X_L)$ be uniformly distributed over $\Sigma_4^L$, and let $\mathbf{Y}$ denote the output of the IDS channel given $\mathbf{X}$.
    For any $k \in \{1, \ldots, L\}$ the \emph{local rate} of position $k$ is defined by
    $
        r_k := I(X_k ; \mathbf{Y}  \mid  X_1, \ldots, X_{k-1}).
    $
\end{definition}
\noindent By the chain rule for mutual information, we have that
\vspace{-7pt}
\begin{align}
    \sum_{k = 1}^{L} r_k = I(\mathbf{X} ; \mathbf{Y}) = L \cdot I_L,
    \vspace{-10pt}
\end{align}
which Proposition~\ref{prop:codes-closing-gap} below turns into an achievable rate for codes across strands. Lemma~\ref{lemma:gap_GMI-I} shows that $\mathrm{GMI}_L$ never exceeds $I_L$. The difference is what anchors can recover.
\begin{lemma}\label{lemma:gap_GMI-I}
    Let $\mathbf{X} = (X_1, \ldots, X_L)$ be uniformly distributed over $\Sigma_4^L$, and let $\mathbf{Y}$ denote the output of the IDS channel given $\mathbf{X}$. Using the binary labelling $X_k = 2B_{k, 0} + B_{k, 1}$, we have
    $
    \mathrm{GMI}_L \leq I_L .
    $
\end{lemma}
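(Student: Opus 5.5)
We will show that each bit term in \eqref{eq:gmi} is at least a conditional entropy, and then compare the resulting sum of entropies with $H(\mathbf{X}\mid\mathbf{Y})$.

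\emph{Step 1: identify each term as a conditional entropy.} Since $\lambda_{k,i}$ is the exact posterior LLR, we have
$$\log_2\bigl(1+\mathrm{e}^{-(1-2B_{k,i})\lambda_{k,i}}\bigr) = -\log_2 \pr{B_{k,i}=b \mid \mathbf{Y}}\big|_{b=B_{k,i}}.$$
The right-hand side is the standard identity $\log_2(1+\mathrm{e}^{\mp\lambda}) = -\log_2 P(b\mid\mathbf{y})$ for the true bit value $b$. Taking the expectation over $\mathbf{X}$ and the channel therefore gives exactly $H(B_{k,i}\mid\mathbf{Y})$. Because the BCJR with uniform priors and both anchors computes the true posteriors, this is an equality, not merely a mismatched-decoding bound. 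Hence
$$\mathrm{GMI}_L = \frac{1}{L}\Bigl(2L - \sum_{k,i} H(B_{k,i}\mid\mathbf{Y})\Bigr).$$

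\emph{Step 2: write $I_L$ in the same form.} The labelling $X_k \mapsto (B_{k,0},B_{k,1})$ is a bijection and $\mathbf{X}$ is uniform, so $H(\mathbf{X}) = 2L$ bits. This gives
$$L\, I_L = 2L - H(\mathbf{X}\mid\mathbf{Y}) = 2L - H(B_{1,0},B_{1,1},\dots,B_{L,0},B_{L,1}\mid\mathbf{Y}).$$

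\emph{Step 3: compare.} By the chain rule and the fact that conditioning reduces entropy (subadditivity of conditional entropy),
$$H(B_{1,0},\dots,B_{L,1}\mid\mathbf{Y}) \le \sum_{k,i} H(B_{k,i}\mid\mathbf{Y}).$$
Substituting into Steps 1 and 2 yields $L\,\mathrm{GMI}_L \le L\, I_L$. Equivalently, one can route the argument through the local rates of Definition~\ref{def:local-rate}: $r_k = 2 - H(X_k\mid \mathbf{Y}, X_1,\dots,X_{k-1}) \ge 2 - H(B_{k,0}\mid\mathbf{Y}) - H(B_{k,1}\mid\mathbf{Y})$, and summing over $k$ gives the same conclusion. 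This second form has the advantage of exhibiting the per-position gap that anchors can recover.

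The only delicate point is Step 1. We must justify that $\lambda_{k,i}$ is the exact posterior, which holds because the drift trellis, with both endpoints $\delta_0=0$ and $\delta_L=M-L$ known, is an exact model of $\pr{\mathbf{Y}\mid\mathbf{X}}$. We must also verify the sign convention in the LLR identity. If one preferred not to rely on exactness, the general GMI-versus-mutual-information inequality of~\cite{martinez_2009_BitInterleavedCoded} would also suffice. With exact posteriors, however, the direct entropy argument above is self-contained.
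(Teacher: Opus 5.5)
Your proof is correct and is essentially the paper's argument: both identify each expectation in~\eqref{eq:gmi} as $H(B_{k,i}\mid\mathbf{Y})$, then finish using the independence of the uniform bits and subadditivity of conditional entropy. The only difference is bookkeeping: you apply subadditivity once to the whole vector, whereas the paper proceeds per position through $\sum_i I(B_{k,i};\mathbf{Y})\le I(X_k;\mathbf{Y})\le r_k$ and the chain rule, which is exactly the alternative you sketch at the end.
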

\begin{proof}
    Since $\lambda_{k,i}$ is the posterior LLR, each expectation in~\eqref{eq:gmi} equals $H(B_{k,i}\mid\mathbf{Y})$, so, as $H(B_{k,i})=1$, $\mathrm{GMI}_L=\tfrac{1}{L}\sum_k\sum_i I(B_{k,i};\mathbf{Y})$. As the bits of $X_k$ are independent, $\sum_i I(B_{k,i};\mathbf{Y})\le I(X_k;\mathbf{Y})$, and as the positions are independent, $I(X_k;\mathbf{Y})\le r_k$. Summing over $k$ and using the chain rule completes the proof.
\end{proof}
The gap between the two rungs is thus what knowledge of earlier positions, i.e., anchors, contributes. Codes across strands can recover this gap, as the following result shows.
\begin{proposition}\label{prop:codes-closing-gap}
    Consider a frame of $n$ strands. Let each column $\mathbf{x}_k$ be a codeword of a code $\mathcal{C}_k$ of rate $R_k < r_k$. Assume the columns are decoded in increasing order of their indices, each given the knowledge of the previously decoded columns. As $n$ grows large, there exist codes achieving vanishing error probability and $\tfrac{1}{L} \sum_{k}R_k \longrightarrow I_L$.
\end{proposition}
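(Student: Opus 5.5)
The plan is a multilevel-coding / successive-cancellation argument, with each column treated as one use of a memoryless ``super-channel'' across strands. Fix $k$. Because the $n$ strands pass through independent IDS channels and have i.i.d.\ uniform inputs, the triples $\bigl(X_k^{(j)},\, (X_1^{(j)},\dots,X_{k-1}^{(j)}),\, \mathbf{Y}^{(j)}\bigr)$, $j=1,\dots,n$, are i.i.d.\ in $j$. Suppose the previous columns $\mathbf{x}_1,\dots,\mathbf{x}_{k-1}$ have been decoded correctly. Then column $k$ is transmitted over $n$ independent uses of a discrete memoryless channel with input $X_k$ and output $(\mathbf{Y}, X_1,\dots,X_{k-1})$. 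The output alphabet is finite for fixed $L$ once we condition on $M\le$ some bound, or countable with finite entropy in general. Since $X_k$ is independent of $(X_1,\dots,X_{k-1})$, the mutual information of this channel under a uniform input is
\begin{equation}
I(X_k; \mathbf{Y}, X_1,\dots,X_{k-1}) = I(X_k;\mathbf{Y}\mid X_1,\dots,X_{k-1}) = r_k .
\end{equation}

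First, for each $k$ I invoke the standard random-coding (channel coding) theorem with uniform i.i.d.\ codebooks. It gives codes $\mathcal{C}_k$ of length $n$ and any rate $R_k<r_k$ whose error probability $\varepsilon_k(n)\to 0$ as $n\to\infty$ under joint typicality or ML decoding, with the side information $(\mathbf{x}_1,\dots,\mathbf{x}_{k-1})$ treated as part of the channel output. The codebooks are drawn independently for different $k$, so the columns remain independent and uniform in the random-coding ensemble. This matters because it is exactly the joint distribution under which $r_k$ was defined.

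Second, I handle error propagation with the usual genie-aided argument. The event that the sequential decoder errs on some column is contained in the union over $k$ of the events that the decoder of column $k$, given the \emph{true} previous columns, errs. Hence the frame error probability is at most $\sum_{k=1}^L \varepsilon_k(n)$, which tends to $0$ because $L$ is fixed. Averaging over the ensemble and extracting one good code per $k$ (or fixing a good realisation of the whole tuple via Markov's inequality) yields deterministic codes.

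Third, I choose $R_k = r_k - \epsilon$ with $\epsilon\downarrow 0$ slowly as $n\to\infty$. Then $\frac1L\sum_k R_k = \frac1L\sum_k r_k - \epsilon \to I_L$ by the chain-rule identity $\sum_k r_k = L\, I_L$ stated before Lemma~\ref{lemma:gap_GMI-I}.

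The main obstacle is the first step. It requires checking that the induced per-column channel is genuinely memoryless across strands, and that the channel coding theorem applies despite the unbounded output length $M$. For memorylessness, I would argue from the independence of the strands and the action sequences. For unbounded $M$, I would either truncate at $M\le M_0$, costing vanishing probability and rate, or use the information-spectrum (Feinstein) version, which needs only that the i.i.d.\ information density obeys the law of large numbers with mean $r_k$. That holds because $r_k\le \log_2 4$ is finite.
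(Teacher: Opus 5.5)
Your proposal is correct and follows the same route as the paper. The paper also uses successive decoding in column order and treats each strand as an independent use of the channel $X_k \to (\mathbf{Y}, X_1, \dots, X_{k-1})$ with mutual information $r_k$, with random uniform codes at each stage and a union bound over the $L$ stages; your genie-aided error-propagation argument and your handling of the unbounded output length add details the paper leaves implicit.

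One small caution: of your two derandomisation options, only fixing one good realisation of the whole tuple of codes is justified. Once the later codes are fixed deterministically, their columns are no longer i.i.d.\ uniform across strands, and the stage-$k$ channel model assumes they are.
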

\begin{proof}
    With random codes of uniform symbols, and given correctly decoded columns $1,\dots,k-1$, each strand is an independent use of a channel with input $X_k$ and output $(\mathbf{Y},X_1,\dots,X_{k-1})$, whose mutual information is $r_k$. Hence every rate $R_k<r_k$ is achievable at stage $k$, and a union bound over the $L$ stages completes the proof.
\end{proof}

\section{Inter-Strand LDPC Code}
\label{sec:structure}

\subsection{Construction}\label{sec:structure:construction}
A frame consists of $n$ strands of $L$ nucleotides each, arranged as a matrix $\mathbf{C} \in \Sigma_4^{n \times L}$ whose rows are the strands and whose $t$-th column $\mathbf{x}_t$ collects position $t$ of all strands.
With the binary labelling $x=2b_0+b_1$, the frame splits into two binary planes $\mathbf{C}^{(0)}, \mathbf{C}^{(1)}\in\mathbb{F}_2^{n\times L}$, and we denote the $t$-th column of $\mathbf{C}^{(i)}$ by $\mathbf{x}_{t}^{(i)}$. Each plane is a codeword of the same binary SC-LDPC code~\cite{kudekar_2011_ThresholdSaturationa}. No inner code and no markers are used. All redundancy is in this single outer code.

We use a memory-one quasi-cyclic SC-LDPC code built from a protograph. The code is defined by two base matrices $\mathbf{P}_0,\mathbf{P}_1$ of size $m_\mathrm{b}\times n_\mathrm{b}$, lifted with $Z \times Z$ circulant permutation matrices to $\mathbf{H}_0,\mathbf{H}_1\in\mathbb{F}_2^{m_\mathrm{b}Z\times n_\mathrm{b}Z}$. We set $n=n_\mathrm{b}Z$, so that the
$n$ variable nodes of spatial position $t$ are the bits of column $t$ (Fig.~\ref{fig:sc-structure}).
The code is terminated: with $\mathbf{x}_0 = \mathbf{x}_{L+1} = \mathbf{0}$, a plane is a codeword if and only if
\begin{equation}
    \mathbf{H}_0\,\mathbf{x}_t \oplus \mathbf{H}_1\,\mathbf{x}_{t-1} = \mathbf{0},
    \qquad t = 1, \dots, L+1 .
    \label{eq:sc-syndrome}
\end{equation}
We refer to the checks of index $t$ in~\eqref{eq:sc-syndrome} as \emph{block row $t$}.
The terminating block row $L+1$ is satisfied by solving for the information bits of the last $T$ positions, which gives the rate \(R_{\mathrm{b}} = (1 - T/L)(1 - m_{\mathrm{b}}/n_{\mathrm{b}})\) per plane, i.e., $R = 2R_\mathrm{b}$~bit/nt for the frame.

\begin{figure}[t]
    \centering
    \resizebox{\columnwidth}{!}{\begin{tikzpicture}[
    vn/.style={draw=black, circle, minimum size=0.3cm, fill=white, inner sep=0pt},
    cn/.style={draw=black, rectangle, minimum size=0.35cm, fill=white, inner sep=0pt},
    cn boundary/.style={draw=black, rectangle, minimum size=0.35cm, fill=red, inner sep=0pt},
    pics/perm3d/.style={code={
        \fill[gray!10] (0.25,-0.25) -- (1.25,1) -- (1.25,1.5) -- (0.25,0.25) -- cycle;
        \draw[draw=black] (0.25,-0.25) -- (1.25,1);
        \draw[draw=black] (0.25,0.25) -- (1.25,1.5);
        \draw[draw=black, dotted] (1.25,1) -- (1.5,1.3125);
        \draw[draw=black, dotted] (1.25,1.5) -- (1.5,1.8125);
        \draw[draw=black, fill=gray!10] (-0.25,-0.25) rectangle (0.25,0.25);
        \node at (0,0) {$\pi$};
        \fill[gray!10] (0.75,1.5) -- (-0.25,0.25) -- (0.25,0.25) -- (1.25,1.5) -- cycle;
        \draw[draw=black] (-0.25,0.25) -- (0.75,1.5);
        \draw[draw=black, dotted] (0.75,1.5) -- (1,1.8125);
    }},
    ghost/.style={opacity=0.3},
    conn_line/.style={draw=black, thick},
]

\def\xspacing{1.8}      
\def\xstrandshift{0.5}  
\def\ystrandshift{0.625}  
\def\nstrands{3}        

    
    

\foreach \pos in {1,...,6} {
    \begin{scope}[xshift={\pos*\xspacing cm}]
        \pic (perm-\pos-upper) at (0.5, -0.4) {perm3d};
        
        \foreach \s in {1,...,\nstrands} {
            \node[vn, xshift=0.1cm, yshift=0.1cm] (vn-\pos-\s) at ({\s*\xstrandshift}, {\s*\ystrandshift}) {};
            \draw (vn-\pos-\s) -- ++(-0.1,-0.7cm);
            \draw (vn-\pos-\s) -- ++(0.1,-0.7cm);
        }
        \node[rotate=51.3] at ({(\nstrands+1)*\xstrandshift}, {(\nstrands+1)*\ystrandshift}) {$\cdots$};
        
        \pic (perm-\pos-lower) at (0.5, -2) {perm3d};

        \foreach \s in {1,2} {
            \node[cn, xshift=0.05cm, yshift=-3.6cm] (cn-\pos-\s) at ({\s*\xstrandshift*1.2}, {\s*\ystrandshift*1.2}) {};
        }

        \draw (cn-\pos-1) -- ++(-0.1,0.6cm);
        \draw (cn-\pos-1) -- ++(0.1,0.6cm);
        \draw (cn-\pos-2) -- ++(-0.06,0.4cm);
        \draw (cn-\pos-2) -- ++(0.1,0.6cm);

        \node[anchor=north] (sp-\pos) at (0.5cm, -3.3cm) {\pos};
    \end{scope}
}

\begin{scope}[xshift={0 cm}, ghost]
    \foreach \s in {1,...,\nstrands} {
        \node[vn, xshift=0.1cm, yshift=0.1cm] (vn-0-\s) at ({\s*\xstrandshift}, {\s*\ystrandshift}) {};
        \draw (vn-0-\s) -- ++(-0.1,-0.7cm);
        \draw (vn-0-\s) -- ++(0.1,-0.7cm);
    }
    \pic (perm-0-upper) at (0.5, -0.4) {perm3d};
\end{scope}

\begin{scope}[xshift={7*\xspacing cm}]   
    \pic (perm-7-lower) at (0.5, -2) {perm3d};

    \foreach \s in {1,2} {
        \node[cn, xshift=0.05cm, yshift=-3.6cm] (cn-7-\s) at ({\s*\xstrandshift*1.2}, {\s*\ystrandshift*1.2}) {};
    }
    \draw (cn-7-1) -- ++(-0.1,0.6cm);
    \draw (cn-7-1) -- ++(0.1,0.6cm);
    \draw (cn-7-2) -- ++(-0.06,0.4cm);
    \draw (cn-7-2) -- ++(0.1,0.6cm);
    \node[anchor=north] at (0.5cm, -3.3cm) {7};
\end{scope}

\node[left=of sp-1] {Spatial Position:};

\foreach \pos in {1,...,6} {
    \draw[conn_line] ({\pos*\xspacing + 0.7}, -0.65cm) -- ({\pos*\xspacing + 0.6}, -1.63);
    \draw[conn_line] ({\pos*\xspacing + 0.9}, -0.45cm) -- ({\pos*\xspacing + 1.1}, -1);
    \draw[conn_line] ({\pos*\xspacing + 1.6}, 0.4cm) -- ({\pos*\xspacing + 1.45}, -0.5);
}

\foreach \pos in {0,...,6} {
    \draw[conn_line] ({\pos*\xspacing + 0.8}, -0.6) -- ({(\pos+1)*\xspacing + 0.75}, -1.4);
    \begin{scope}[on background layer]
        \draw[conn_line] ({\pos*\xspacing + 1.28}, 0) -- ({(\pos+1)*\xspacing + 1.4}, -0.6);
    \end{scope}
}

\begin{scope}
    \node[fit=(vn-1-1)(vn-6-1), fill=violet, opacity=0.2, rounded corners, inner sep=3pt] (strand1) {};
    \node[fit=(vn-1-2)(vn-6-2), fill=green, opacity=0.2, rounded corners, inner sep=3pt] (strand2) {};
    \node[fit=(vn-1-3)(vn-6-3), fill=orange, opacity=0.2, rounded corners, inner sep=3pt] (strand3) {};
\end{scope}

\end{tikzpicture}}
    \caption{Inter-strand SC-LDPC code for $L = 6$. Each coloured band is a
    strand, and the circles above position $t$ are the bits of column $t$.
    The layers $\pi$ connect them via $\mathbf{H}_0$ to block row $t$ and via
    $\mathbf{H}_1$ to block row $t+1$ (squares). Faded: the known
    $\mathbf{x}_0 = \mathbf{0}$.}
    \label{fig:sc-structure}
\end{figure}

The channel and the code act along orthogonal axes.
The strands drift independently, so the timing uncertainty of one strand is unrelated to that of any other, whereas the checks of block row $t$ involve positions $t-1$ and $t$ of \emph{all} $n$ strands. Once the checks around a position are satisfied, that position is known in every strand, and it becomes a synchronisation anchor for all $n$ inner decoders at once. Since spatial position and strand position coincide, the decoding wave of the coupled code and the propagation of these anchors travel along the same axis. The decoder in Section~\ref{sec:schedule} is built on exactly this. It alternates in turbo fashion between an inner decoder per strand (Section~\ref{sec:trellis}) and belief propagation (BP) on the two planes.

\subsection{Drift Trellis and Branch Metrics}
\label{sec:trellis}

Each strand is decoded by a BCJR algorithm on the drift
trellis~\cite{davey_2001_ReliableCommunication}, whose state after step $k$ is
the drift $\delta_k$ of Section~\ref{sec:channel}. Since the received length
$M_s$ of strand $s$ is known, the trellis is anchored at both ends,
$\delta_0 = 0$ and $\delta_L = M_s - L$. The received symbols enter through the
likelihoods $\mathcal{L}_j(v) = \pr{y_j \mid v}$, which for the channel of
Section~\ref{sec:channel} are $1 - \psub$ for $v = y_j$ and
$\psub/3$ otherwise. Substitutions are thus absorbed into the
observation, and the trellis itself depends only on $\pins$ and
$\pdel$; soft basecaller output fits the same interface. The outer
decoder supplies bit LLRs $\lambda^{\mathrm{a}}_{k,i} = \ln\frac{\pr{b_{k,i}=0}}{\pr{b_{k,i}=1}}$ for
the two planes $i \in \{0,1\}$, which combine into the symbol prior
$\rho_k(v) \propto \exp\bigl(\sum_i (1 - b_i(v))\,\lambda^{\mathrm{a}}_{k,i}\bigr)$, where $b_i(v)$ is bit $i$ of symbol $v$. From
state $\delta$ at step $k$, the three branch metrics are
\begin{equation}
  \begin{aligned}
    \gamma^{\mathrm{T}}_k(\delta, v) &= \ptran\,\mathcal{L}_{k+\delta}(v)\,\rho_k(v),
      && \delta \to \delta, \\
    \gamma^{\mathrm{D}}_k &= \pdel \textstyle\sum_v \rho_k(v),
      && \delta \to \delta - 1, \\
    \gamma^{\mathrm{I}}_k(\delta) &= \tfrac{\pins}{4} \textstyle\sum_v \mathcal{L}_{k+\delta}(v),
      && \delta \to \delta + 1,
  \end{aligned}
  \label{eq:branch}
\end{equation}
for transmission, deletion and insertion. An insertion consumes a received
symbol without advancing $k$ and, as in the channel model, precedes the
transmission or deletion of $x_k$. Let $\alpha_k$ and $\beta_k$ denote the forward
and backward state metrics, and $\tilde\alpha_{k-1}$ the forward metric after
the insertions that precede step $k$. Only transmission branches distinguish the
values of $x_k$, so the extrinsic LLR of plane $i$ is
\begin{equation}
  \lambda^{\mathrm{e}}_{k,i} =
  \ln \frac{\sum_{\delta}\sum_{v:\,b_i(v)=0}
      \tilde\alpha_{k-1}(\delta)\,\gamma^{\mathrm{T}}_k(\delta,v)\,\beta_k(\delta)}
    {\sum_{\delta}\sum_{v:\,b_i(v)=1}
      \tilde\alpha_{k-1}(\delta)\,\gamma^{\mathrm{T}}_k(\delta,v)\,\beta_k(\delta)}
  - \lambda^{\mathrm{a}}_{k,i}.
  \label{eq:extrinsic}
\end{equation}
The drift is unbounded, so at each step we keep only the $\kappa$ drift values with the largest metric and discard the rest. We call this set the \emph{beam} and $\kappa$ its \emph{beam width}. A beam grown in one direction is accurate near its own anchor and diffuse near the other end, and for a moderate beam width $\kappa$ it loses the true drift there. We therefore let the recursion that starts from the anchor nearer to the positions being decoded select the states, and restrict the opposite recursion to them. When a whole strand is processed, both directions select states and the recursions are recomputed on the union. Since at most $\kappa$ drift values survive at each of the $L$ positions, one pass costs $O(L\kappa)$ per strand.

\subsection{Windowed Schedule and Anchor Propagation}
\label{sec:schedule}
The drift uncertainty is smallest at the two anchored ends of every strand and grows towards the middle. We therefore decode with two windows of $W$ positions, the left and the right \emph{wave}, that start at the two ends of the chain and move towards each other. A single full-strand BCJR pass with uniform priors initialises the state metrics, which serve as boundary conditions on the side of each window facing away from its anchor.

One decoding step is one turbo iteration restricted to the windows: the BCJR of every strand computes~\eqref{eq:extrinsic} on the window, windowed BP runs a bounded number of iterations on the block rows touching it, and the extrinsic BP output becomes the prior of the next BCJR pass. Let $\hat{\mathbf{x}}^{(i)}_t$ denote the hard decision on column $t$ of plane $i$.
\begin{definition}\label{def:fix}
    For a lookahead $\Delta\geq 1$, the left wave fixes position $t$ if positions $1,\dots,t-1$ are fixed and block rows $t,\dots,t+\Delta$ of~\eqref{eq:sc-syndrome} are satisfied by $\hat{\mathbf{x}}^{(i)}$ for both $i\in\{0,1\}$. The right wave is defined symmetrically.
\end{definition}
A fixed position is clamped in BP and enters the BCJR of every strand with priors of magnitude $\lambda_{\max}$. A run of known symbols concentrates the drift posterior, so the fixed positions act as a moving anchor for all $n$ strands, and each wave faces the drift uncertainty of $W$ positions rather than of the remaining strand. Satisfied checks do not guarantee correct decisions, however.
For a fixed plane $i$, let $\mathbf{e}_t=\hat{\mathbf{x}}^{(i)}_t\oplus\mathbf{x}^{(i)}_t$ denote the error columns.
Since the transmitted plane satisfies~\eqref{eq:sc-syndrome}, block row $t$ is satisfied by the hard decisions if and only if the error columns satisfy $\mathbf{H}_0\mathbf{e}_t\oplus\mathbf{H}_1\mathbf{e}_{t-1}=\mathbf{0}$.
Therefore, a nontrivial error $\mathbf{e}_t$ in the kernel of $\mathbf{H}_0$ that continues over the lookahead passes the rule. Hence, a larger $\Delta$ makes a false fix less likely but cannot exclude it, so fixing must be revocable. A wave that fixes no new position in $S$ consecutive steps retires. If both waves have retired and a gap remains, the decoder releases the last $\Delta+1$ fixed positions on each side of the gap, reruns a full-strand BCJR for every strand with the current priors, and restarts both waves at the edges of the gap; the release depth grows with every attempt, and the number of attempts is capped. Remaining unfixed positions are decoded by one full-strand pass followed by BP over the whole chain.

A step costs $O(nWB)$ trellis operations and $O(nW)$ BP messages per iteration, so decoding is linear in $L$ while the waves advance and quadratic in the worst case.
Unlike windowed decoding along a single sequence~\cite{shibata_2022_WindowedbasedSynchronization}, the window here moves along all $n$ strands at once, and each anchor it produces is shared by all of them.

\section{Simulation Results}
\label{sec:results}

\subsection{Simulation Parameters and Setup}
We split a regular base matrix with variable node degree $3$, check node degree $4$, $n_\mathrm{b}=16$ and $m_\mathrm{b}=12$ into the two components $\mathbf{P}_0$ and $\mathbf{P}_1$ and lift it with $Z=24$. A frame thus consists of $n=384$ strands of $L=100$ nucleotides, and with $T=4$ the rate is $R=0.48$~bit/nt, slightly below the $1/2$~bit/nt of the baselines. 
The decoder uses $W=9$, $B=70$, ten BP iterations per step, $\lambda_{\max}=30$, $S=30$ and $\Delta=2$. A frame error is declared if any of the 38\,400 nucleotides of the frame is wrong after decoding. Every point in Fig.~\ref{fig:fer4} is based on at least 200 frames, and every point at or below $\mathrm{FER}=10^{-3}$ on $10^4$ frames, i.e., $3.8\cdot10^8$ nucleotides. We define the waterfall of a curve as the value of $p_\mathrm{i}=p_\mathrm{d}$ at which it crosses $\mathrm{FER}=10^{-3}$, by log-linear interpolation, and apply this rule to all curves. Table~\ref{tab:ladder} summarises the results. The code for the encoder, the decoder and all simulations is publicly available.\footnote{\url{https://github.com/frederikwalter/Orthogonal-LDPC-Codes-for-the-IDS-Channel-ITW-2027-}}

\begin{table}[t]
    \centering
    \caption{Rate ladder at the waterfall ($\mathrm{FER} = 10^{-3}$) of the
    inter-strand scheme, $L = 100$, in bit/nt.}
    \label{tab:ladder}
    \begin{tabular}{cccccc}
    \toprule
    $\psub$ & $\pins = \pdel$ & $\mathrm{GMI}_{100}$ & $R$ & $I_{100}$ & $I_{100} - R$ \\
    \midrule
    $0$    & $0.110$ & $0.22$ & $0.48$ & $0.94$ & $0.46$ \\
    $0.05$ & $0.082$ & $0.24$ & $0.48$ & $0.91$ & $0.43$ \\
    $0.10$ & $0.055$ & $0.27$ & $0.48$ & $0.89$ & $0.41$ \\
    \bottomrule
    \end{tabular}
\end{table}

\subsection{Gain of Joint Synchronisation}
The separated receiver, which is our receiver without feedback, can transmit $0.48$~bit/nt only up to $p_\mathrm{i}=p_\mathrm{d}=0.038$, $0.029$ and $0.021$ for $p_\mathrm{s}=0$, $0.05$ and $0.10$, even with an ideal outer code. The joint decoder reaches $0.110$, $0.082$ and $0.055$, i.e., it tolerates almost three times the insertion and deletion rate. Equivalently, at our waterfalls the separated receiver supports only $0.22$--$0.27$~bit/nt, about half of our rate. In the terms of Section~\ref{sec:localrates}, this is the part of the gap between $\mathrm{GMI}_L$ and $I_L$ that the anchors recover.

\subsection{Comparison with Concatenated Codes}
\begin{figure}[t]
    \centering
    \vspace{3pt}

\begin{tikzpicture}
\begin{axis}[
    width=\columnwidth, height=0.7\columnwidth,
    xlabel={$p_{\mathrm{i}} = p_{\mathrm{d}}$},
    ylabel={FER},
    xmin=0.025, xmax=0.195,
    ymin=5e-6, ymax=2,
    ymode=log,
    grid=major,
    grid style={dashed, gray!40},
    xtick={0.04, 0.06, 0.08, 0.10, 0.12, 0.14, 0.16, 0.18},
    xticklabel style={/pgf/number format/fixed, /pgf/number format/fixed zerofill,
                        /pgf/number format/precision=2},
    legend cell align={left},
    legend style={font=\footnotesize, legend columns=2, at={(0.5,-0.2)},
                  anchor=north, draw=none, /tikz/column 2/.style={column sep=6pt}},
    unbounded coords=discard,
    every axis plot/.append style={thick},
]
    \addplot[color=PsZero, mark=*]
        table[x=p, y=fer_ps000, col sep=comma]{fig/fer_sc_ldpc_4ary.csv};
    \addlegendentry{this work, $p_{\mathrm{s}} = 0$}
    \addplot[color=PsZero, mark=o, dashed, mark options={solid}]
        table[x=p, y=maarouf_tit23_ps000, col sep=comma]{fig/fer_sc_ldpc_4ary.csv};
    \addlegendentry{one-pass, $p_{\mathrm{s}} = 0$}
    \addplot[color=PsFive, mark=square*]
        table[x=p, y=fer_ps050, col sep=comma]{fig/fer_sc_ldpc_4ary.csv};
    \addlegendentry{this work, $p_{\mathrm{s}} = 0.05$}
    \addplot[color=PsFive, mark=square, dashed, mark options={solid}]
        table[x=p, y=maarouf_tit23_ps050, col sep=comma]{fig/fer_sc_ldpc_4ary.csv};
    \addlegendentry{one-pass, $p_{\mathrm{s}} = 0.05$}
    \addplot[color=PsTen, mark=triangle*]
        table[x=p, y=fer_ps100, col sep=comma]{fig/fer_sc_ldpc_4ary.csv};
    \addlegendentry{this work, $p_{\mathrm{s}} = 0.10$}
    \addplot[color=PsTen, mark=triangle, dashed, mark options={solid}]
        table[x=p, y=maarouf_tit23_ps100, col sep=comma]{fig/fer_sc_ldpc_4ary.csv};
    \addlegendentry{one-pass, $p_{\mathrm{s}} = 0.10$}
    \addplot[color=black, mark=diamond, dashed, mark options={solid}]
        table[x=p, y=sim_n960_tvc1, col sep=comma]{fig/fer_maarouf_istc23.csv};
    \addlegendentry{iterative, $N = 960$}
    \addplot[color=black, dashdotted, no markers]
        table[x=p, y=dt_n960_tvc1, col sep=comma]{fig/fer_maarouf_istc23.csv};
    \addlegendentry{DT bound, $N = 960$}
    \addplot[color=black!50, mark=pentagon, dashed, mark options={solid}]
        table[x=p, y=sim_n128_tvc1, col sep=comma]{fig/fer_maarouf_istc23.csv};
    \addlegendentry{iterative, $N = 128$}
    \addplot[color=black!50, dashdotted, no markers]
        table[x=p, y=dt_n128_tvc1, col sep=comma]{fig/fer_maarouf_istc23.csv};
    \addlegendentry{DT bound, $N = 128$}
    \addplot[color=PsZero, densely dotted, forget plot]
        coordinates {(0.10, 1e-4) (0.11, 1.1e-3)};
    \addplot[color=PsZero, only marks, mark=*, forget plot]
        table[x=p, y=ub_ps000, col sep=comma]{fig/fer_sc_ldpc_4ary.csv};
    \addplot[color=PsZero, ->, forget plot]
        coordinates {(0.10, 1e-4) (0.10, 1.5e-5)};
    \addplot[color=PsFive, densely dotted, forget plot]
        coordinates {(0.07, 1e-4) (0.08, 6e-4)};
    \addplot[color=PsFive, only marks, mark=square*, forget plot]
        table[x=p, y=ub_ps050, col sep=comma]{fig/fer_sc_ldpc_4ary.csv};
    \addplot[color=PsFive, ->, forget plot]
        coordinates {(0.07, 1e-4) (0.07, 1.5e-5)};
    \addplot[color=PsTen, densely dotted, forget plot]
        coordinates {(0.04, 1e-4) (0.05, 1e-4)};
    \addplot[color=PsTen, only marks, mark=triangle*, forget plot]
        table[x=p, y=ub_ps100, col sep=comma]{fig/fer_sc_ldpc_4ary.csv};
    \addplot[color=PsTen, ->, forget plot]
        coordinates {(0.04, 1e-4) (0.04, 1.5e-5)};
\end{axis}
\end{tikzpicture}
    \vspace{-7mm}
    \caption{FER versus $\pins = \pdel$ of the inter-strand scheme ($R = 0.48$
    bit/nt; solid, filled) and of concatenated codes at $R = 1/2$ bit/nt
    (dashed, open): one-pass with
    $N = 960$~\cite{maarouf_2023_ConcatenatedCodes}; iterative with
    $N = 960$ (black) and $N = 128$ (grey), with their dependence-testing (DT)
    bounds (dash-dotted), $\psub = 0$ only~\cite{maarouf_2023_FiniteBlocklength}. Colour marks
    $\psub$. Arrows: no frame error in $10^4$ frames, drawn at $10^{-4}$.}
    \vspace{-5mm}
    \label{fig:fer4}
\end{figure}
Fig.~\ref{fig:fer4} compares our scheme with the concatenated construction of~\cite{maarouf_2023_ConcatenatedCodes}, the closest rate-matched point in the literature: the same alphabet and channel, $R=1/2$~bit/nt and a single read per strand. Of the $1.5$~bit/nt it gives up against the quaternary alphabet, $1$~bit/nt goes to the inner synchronisation code and $0.5$ to the outer code. Ours spends none and decodes at higher insertion and deletion rates at every substitution rate, with margins of $0.8$ to $1.3$ percentage points. The margin grows with $p_\mathrm{s}$, which suggests that joint synchronisation across strands degrades more gracefully than a single-strand inner code when the observations are noisier. No frame error occurred in $10^4$ frames at $p_\mathrm{i}=p_\mathrm{d}=0.10$, $0.07$ and $0.04$, respectively, and no error floor is visible down to $\mathrm{FER}=10^{-4}$.

With an EXIT-optimised outer code and up to $100$ turbo iterations, the same construction reaches $0.142$ at $N=960$~\cite{maarouf_2023_FiniteBlocklength}. These ingredients moved that family from $0.102$ to $0.142$, and none is used in our decoder yet. At $N=128$, the length closest to our strands, the iterative scheme reaches $0.100$, and its DT bound is $0.107$; our scheme decodes at $0.110$, since its codeword length is not tied to the strand length. Three differences should be kept in mind: our codeword spans the whole frame of $38\,400$ nucleotides, whereas the baselines decode codewords of $N$ nucleotides; the baseline of~\cite{maarouf_2023_ConcatenatedCodes} decodes in a single pass; and~\cite{maarouf_2023_FiniteBlocklength} reports $p_\mathrm{s}=0$ only.

Our decoder also needs fewer operations than the iterative concatenated decoder. Following~\cite{maarouf_2023_ConcatenatedCodes}, we count the edges of the inner trellis: a forward--backward pass costs $2^{k_\mathrm{i}}D\,(n_\mathrm{i}(I_{\max}+1)+1)/n_\mathrm{i}$ edges per nucleotide for $D$ drift states, inner codewords of $n_\mathrm{i}$ nucleotides carrying $k_\mathrm{i}$ bits, and at most $I_{\max}$ insertions per symbol. The concatenated decoder of~\cite{maarouf_2023_FiniteBlocklength} limits the drift to $\pm5\sqrt{Np_\mathrm{d}/(1-p_\mathrm{d})}\approx\pm63$, i.e., $D=127$ at $N=960$ and $p_\mathrm{d}=0.142$, with $n_\mathrm{i}=k_\mathrm{i}=4$ and $I_{\max}=2$, so a pass costs $16\cdot127\cdot13/4\approx6.6\cdot10^3$ edges per nucleotide, repeated in up to $100$ turbo iterations. Our trellis has $n_\mathrm{i}=1$, $k_\mathrm{i}=2$ and $I_{\max}=3$, and $D$ is the beam width $B=70$, which does not grow with $L$, so a full-strand pass costs $4\cdot70\cdot5=1.4\cdot10^3$ edges per nucleotide. At $p_\mathrm{i}=p_\mathrm{d}=0.11$, decoding a frame takes on average $75$ steps of two windows of $W=9$ positions, about $14$ full-strand passes, plus an initial pass from both ends that counts as two, i.e., about $16$ passes or $2.2\cdot10^4$ edges per nucleotide. A complete frame decoding thus costs about as much as three turbo iterations of the concatenated decoder.

\subsection{Distance to Achievable Rate}
At its waterfalls our scheme operates $0.41$--$0.46$~bit/nt below $I_{100}$, which would permit $0.48$~bit/nt up to $0.194$, $0.168$ and $0.139$. Since no rate is spent on synchronisation, this gap is not built into the construction. In the frame logs, failing frames end with both waves stalled and a contiguous unfixed gap, not with a diffuse error pattern: a position that neither wave resolves blocks the span between them, and the release of Section~\ref{sec:schedule} acts only after both waves have stalled. The gap is therefore a matter of the schedule and the outer code, which is unoptimised, rather than of the construction.

\section{Conclusion}
\label{sec:conclusion}
We proposed an inter-strand SC-LDPC scheme for the quaternary IDS channel whose
coupling axis coincides with the position along the strands, so that every
position the code certifies becomes a synchronisation anchor for all strands at
once. Without an inner code or markers, it operates at $R = 0.48$ bit/nt to the
right of a rate-matched concatenated construction at every substitution rate,
and beyond a single-strand code of $N = 128$ nucleotides and its DT bound,
because the codeword length no longer depends on the strand length. At its
waterfall it transmits about twice the rate of a separated receiver, but it
remains $0.41$--$0.46$ bit/nt below the information rate of a strand.

Future work will target this gap through an SC-LDPC code optimised by
EXIT analysis, larger coupling widths, and a permutation of the strands at each
spatial position. Further directions are longer simulations to test for an
error floor below $\mathrm{FER} = 10^{-4}$, multiple reads per strand, soft
input from a nanopore basecaller, and an extension to the multi-strand channel.

\clearpage
\section*{Acknowledgement}
The authors used the AI system Claude (Anthropic) to assist in improving the code for the simulations and to generate the figures in this manuscript.
All AI-generated content was checked for accuracy and revised by the authors, who take full responsibility for the content of this paper.

\bibliographystyle{IEEEtran}
\bibliography{LDPC_deletion}

\end{document}